\documentclass[runningheads]{llncs}

\usepackage[T1]{fontenc}
\usepackage[utf8]{inputenc}
\usepackage{amssymb, bm}
\usepackage{braket}
\usepackage{graphicx}
\usepackage{hyperref}
\usepackage{color}
\usepackage{physics}
\usepackage{booktabs}
\usepackage{enumitem}

\newcommand{\Hil}{\mathcal{H}}
\newcommand{\Real}{\mathbb{R}}
\newcommand{\Complex}{\mathbb{C}}

\begin{document}

\title{The Inverse Born Rule Equivalence.\\[4pt]
\large On the Informational Limits of Real-Valued Amplitude Encodings
and the Measurement of Quantum Advantage in Data Embeddings}

\titlerunning{The Inverse Born Rule Equivalence}

\author{Sebastian Zaj\k{a}c\inst{1} \and
Jacob L. Cybulski\inst{2,3} \and
Bartosz Dziewit\inst{4} \and
Tomasz Kulpa\inst{5}}

\authorrunning{S. Zaj\k{a}c et al.}

\institute{Military University of Technology, Warsaw, Poland\\
\email{sebastian.zajac@wat.edu.pl}\and
Enquanted, Australia\and
Deakin University, Melbourne, Australia\and
Institute of Physics, University of Silesia in Katowice, Poland\and
Cardinal Stefan Wyszynski University in Warsaw, Poland}

\maketitle

\begin{abstract}
When does quantum data encoding provide genuine quantum advantage, 
and when does it merely rephrase a classically solvable problem? 

We prove an \emph{Equivalence Theorem} demonstrating that any encoding mapping classical data to real-valued amplitudes,  $\ket{\psi_c} = \sum_i c_i \ket{i}$ with $c_i \in \Real$ and $\sum_i c_i^2 = 1$, composed with a data-independent parameterised unitary and computational-basis measurement, yields exactly the class of classical quadratic forms. 
We identify the geometric mechanism driving this collapse: the restriction to $\Real$ forces a vanishing Berry connection, removing the complex phases required for data-dependent quantum interference. 
To operationalize this boundary, we introduce encoding diagnostics---phase complexity $C[\Phi]$ and mode-wise von Neumann mutual information $I[\Phi]$---and link them to the information-geometric excess $\Delta g$. 
We show that for all real-valued 
encodings, $\Delta g = 0$ identically.
We term the misidentification of such models as evidence of quantum computational power the \emph{Inverse Born Rule Fallacy}. Supported by numerical experiments, our results establish that complex-phase structure is a strictly necessary condition for data-driven (Type~B) quantum advantage.
\end{abstract}

\section{Introduction}
Quantum Machine Learning (QML) seeks to exploit quantum interference to reach a class of hypotheses inaccessible to classical models.
A central design decision in any QML pipeline is the \emph{data encoding}:
the quantum feature map $\Phi : \mathcal{X} \to \mathcal{D}(\Hil)$ that embeds
classical data into quantum states.
The encoding determines which functions the model can express; yet, despite
growing theoretical and empirical interest
\cite{Schuld2019,Havlicek2019,PerezSalinas2020}, there is no general
criterion for when an encoding provides a genuine quantum advantage.
As we show, a widely used family of encodings provably does not.

\subsection{Two Types of Quantum Advantage}




We distinguish two paradigms of ``quantum advantage'' that are frequently 
conflated in the QML literature.

\emph{Type~A advantage} refers to pure computational speedups derived from 
the inherent hardness of simulating a quantum circuit, independent of 
the data encoding. Canonical examples include Grover's search \cite{Grover1996} 
and Shor's factoring \cite{Shor1994}.

\emph{Type~B advantage}, in contrast, refers to an enlargement of the 
hypothesis class. Here, the set of input--output functions expressible 
by the model is strictly richer than what any efficient classical model 
can achieve. Crucially, this enrichment depends entirely on the 
encoding~$\Phi$.

These two paradigms are distinct. For instance, Grover's algorithm 
utilizes amplitude encoding but yields only a Type~A advantage; its 
speedup stems from the oracle's structure, rather than the statistical 
properties of the encoded data.\begingroup
\renewcommand{\thefootnote}{*}\footnote{Our results therefore do not contradict the efficiency of Grover's algorithm or any other Type~A speedup.}\endgroup

This paper focuses exclusively on Type~B advantage. Our central result 
demonstrates that real-valued amplitude encoding---including probability 
loading as a non-negative sub-case---provides zero Type~B advantage. 
Regardless of the subsequent parameterised ansatz, its hypothesis class 
remains entirely classically realisable via a standard inner-product kernel.

\subsection{Contributions}

We make four primary contributions:%
\begin{itemize}[topsep=0pt]
    \item \textbf{Equivalence Theorem (Section~\ref{sec:equiv}):} We prove that real-valued amplitude encoding---including non-negative probability loading as a special case---composed with any data-independent unitary yields a hypothesis class that reduces exactly to a classically realisable quadratic form.
    
    \item \textbf{Geometric Mechanism (Section~\ref{sec:geometry}):} We show that restricting amplitudes to~$\Real$ forces the Berry connection to vanish. This restriction collapses the quantum Fubini--Study metric directly to the classical Fisher--Rao metric, eliminating the geometry required for quantum advantage.
    
    \item \textbf{Encoding Diagnostics (Section~\ref{sec:quantumness}):} We introduce three quantitative metrics to evaluate the nature of quantum encoding: phase complexity~$C[\Phi]$, the magnitude of the Berry connection~$|A_\mu|$, and mode-wise mutual information~$I[\Phi]$---and formally link them with information-geometric excess~$\Delta g$.
    
    \item \textbf{Numerical Validation (Section~\ref{sec:experiments}):} We empirically validate our theoretical boundary using nonlinear classification tasks in five different encoding strategies, demonstrating that our diagnostics reliably predict where classical baselines plateau.
\end{itemize}

\section{Related Work}
\label{sec:related}

Schuld and Killoran~\cite{Schuld2019} established that variational
quantum classifiers are equivalent to kernel methods, with the quantum
feature map defining an implicit kernel
$k(x,x') = |\!\braket{\psi(x)}{\psi(x')}\!|^2$.
That result guarantees the existence of a kernel representation but leaves open which kernels are classically tractable and wyhy.
We answer both questions for real-valued amplitude encodings: the induced kernel is the Hellinger affinity $k_H$, computable in $\mathcal{O}(N^2)$ without quantum hardware (Theorem~1), and the mechanism is a vanishing Berry connection that makes the quantum state manifold isometric to the classical Fisher–Rao geometry (Proposition~1).
Havl\'\i\v{c}ek et al.~\cite{Havlicek2019} showed that such kernels
can be hard to estimate classically when the feature map involves deep
entangling circuits.
The expressive power of the model therefore depends on the
choice of~$\Phi$---a dependence this paper makes precise for
real-valued amplitude encodings.

Among encoding strategies, data re-uploading~\cite{PerezSalinas2020}
demonstrated that repeated data injection enables single-qubit circuits
to approximate any continuous function, suggesting that the manner in
which data participates in the circuit---not merely the initial
embedding---is decisive for expressivity.

Independently, Zhao et al.~\cite{Zhao2026} identified a $\sqrt{p}$
bottleneck in quantum oracle sketching: the precision with which a
probability distribution can be queried via amplitude oracles is
fundamentally limited by the $\ell_2$ structure of~$\sqrt{p}$.
Their result converges with our characterisation of real-valued
amplitude encoding through the Born rule lens, though they address
pure computational complexity (Type~A) rather than hypothesis class 
expressivity (Type~B).

A classical result in information geometry, dating to
\v{C}encov~\cite{Cencov1982}, establishes that the Fisher--Rao metric
is (up to a constant) the unique Riemannian metric on the probability
simplex that is invariant under sufficient statistics.  In
Section~\ref{sec:geometry} we show that for probability loading the
pullback of the quantum Fubini--Study metric satisfies
$\phi^* g_{FS} = \tfrac{1}{4}\, g_{\text{Fisher}}$, collapsing to
this classical geometry---a direct consequence of the vanishing Berry
connection.

\section{Feature Maps and Hypothesis Classes}
\label{sec:features}

Let $\mathcal{X} \subseteq \Real^d$ be a data domain and
$\Hil \cong \Complex^N$ with $N = 2^n$ for $n$ qubits.

\begin{definition}[Quantum feature map]
A quantum feature map is a measurable map
$\Phi : \mathcal{X} \to \mathcal{D}(\Hil)$, where $\mathcal{D}(\Hil)$
is the set of density operators on~$\Hil$.  We write
$\rho(x) = \Phi(x)$ and $\ket{\psi(x)}$ when $\rho(x)$ is pure.
\end{definition}

\begin{definition}[PQC hypothesis class]
A parameterised quantum circuit (PQC) model is specified by a feature
map~$\Phi$, a data-independent unitary family
$\{U(\theta) : \theta \in \Theta \subseteq \Real^m\}$, and an
observable $O \in \mathrm{Herm}(\Hil)$.  The hypothesis class is
\begin{equation}
  \mathcal{F}(\Phi, U, O)
    = \bigl\{x \mapsto
      \mathrm{tr}\bigl[O\,U(\theta)\,\rho(x)\,U(\theta)^\dag\bigr]
      : \theta \in \Theta\bigr\}.
\end{equation}
\end{definition}

\subsection{Canonical Encoding Strategies}
\label{sec:five}

We consider the following canonical feature maps. Throughout,
$x = (x_1, \ldots, x_d) \in \mathcal{X}$ and
$p(x) = (p_0(x), \ldots, p_{N-1}(x))$ is a probability vector.

\paragraph{Real-valued amplitude encoding (RAE).}
$\ket{\psi_{\mathrm{RAE}}(x)} = \sum_{i=0}^{N-1} c_i(x)\ket{i}$,
with $c_i(x) \in \Real$ and $\sum_i c_i(x)^2 = 1$.
This encompasses \emph{probability loading} (PL) as the non-negative
sub-case $c_i(x) = \sqrt{p_i(x)} \geq 0$.

\begin{remark}[On ``amplitude encoding'']
\label{rem:amplitude}
The term ``amplitude encoding'' in the QML literature sometimes
refers to the general family permitting complex amplitudes
($c_i \in \Complex$), including Approximate Complex Amplitude Encoding
(AAE).  Our Equivalence Theorem (Section~\ref{sec:equiv}) applies only
to the sub-family in which amplitudes are restricted to real values.
Complex-phase amplitude encodings can break the classical equivalence
established below.
\end{remark}

\paragraph{Probability loading (PL).}
$\ket{\psi_{\mathrm{PL}}(x)} = \sum_{i=0}^{N-1}
\sqrt{p_i(x)}\,\ket{i}$, with all amplitudes real and non-negative.
A special case of RAE; requires $n = \lceil\log_2 N\rceil$ qubits.

\paragraph{Angle encoding (AE).}
$\ket{\psi_{\mathrm{AE}}(x)} = \bigotimes_{j=1}^{n}
\bigl(\cos(x_j/2)\ket{0} + \sin(x_j/2)\ket{1}\bigr)$, encoding
$d = n$ features into single-qubit $R_y$ rotations.

\paragraph{Data re-uploading (RU).}
$\ket{\psi_{\mathrm{RU}}(x)} =
U_L(\theta_L)\,\Phi_0(x)\cdots U_1(\theta_1)\,\Phi_0(x)\ket{0}$,
where $\Phi_0$ is a base encoding repeated $L$ times
\cite{PerezSalinas2020}.

\paragraph{Sandwich encoding (SW).}
$\ket{\psi_{\mathrm{SW}}(x)} =
R_Y(x/2)\,e^{-i\lambda ZZ}\,R_Y(x/2)\ket{0^n}$, a symmetric
second-order Suzuki--Trotter decomposition in which the data
Hamiltonian is split equally on both sides of the entangling layer.
The non-commutativity
$[R_Y(x),\, e^{-i\lambda ZZ}] \neq 0$ ensures that the data actively
modulates quantum phases.

\paragraph{Hamiltonian encoding (HE).}
$\ket{\psi_{\mathrm{HE}}(x)} = e^{-iH(x)t}\ket{0^n}$, where
$H(x) = \sum_j x_j G_j + \sum_{j<k} x_j x_k G_{jk}$ with Pauli
generators $G_j, G_{jk}$.

\begin{remark}[Qubit cost]
RAE/PL encode $N = 2^n$ values into $n$ qubits---exponential
compression.  All other encodings require $n = d$ qubits for
$d$-dimensional input.  The compression advantage is real; however,
it comes at the cost of hypothesis class expressivity, as we now show.
\end{remark}

\section{The Equivalence Theorem}
\label{sec:equiv}

\subsection{Kernels for Real-Valued Amplitude Encoding}

\begin{definition}[Inner-product kernel on $\Real^N$]
\label{def:ipkernel}
For vectors $c, c' \in \Real^N$ with $\|c\|_2 = \|c'\|_2 = 1$, the
inner-product kernel is
\begin{equation}
  k_{\Real}(c, c') = \bigl(c^\top c'\bigr)^2
    = \bigl|\!\braket{\psi_c}{\psi_{c'}}\!\bigr|^2,
\end{equation}
a classical, positive-definite kernel computable in $O(N)$ time.
\end{definition}

\begin{definition}[Hellinger affinity kernel]
\label{def:hellinger}
For probability vectors $p, q$ over $[N]$,
\begin{equation}
  k_H(p, q)
    = \Bigl(\sum_{i=0}^{N-1} \sqrt{p_i\, q_i}\Bigr)^{\!2},
\end{equation}
the non-negative sub-case of $k_{\Real}$.
\end{definition}

\subsection{Main Result}

\begin{theorem}[Equivalence Theorem---Real-Valued Amplitude Encoding]
\label{thm:equiv}
Let $\ket{\psi_c(x)} = \sum_{i=0}^{N-1} c_i(x)\ket{i}$ be a
real-valued amplitude encoding with $c_i(x) \in \Real$ and
$\sum_i c_i(x)^2 = 1$.  For any data-independent unitary $U(\theta)$
and any observable $O \in \mathrm{Herm}(\Hil)$,
\begin{equation}
  f_\theta(x)
    = \bra{\psi_c(x)} U(\theta)^\dag\, O\, U(\theta) \ket{\psi_c(x)}
    = c(x)^\top M(\theta)\, c(x),
\end{equation}
where $M(\theta) = U(\theta)^\dag O\, U(\theta)$ and
$c(x) = (c_0(x), \ldots, c_{N-1}(x))^\top$.

The hypothesis class of any PQC built on real-valued amplitude
encoding is
\begin{equation}
  \mathcal{F}_{\mathrm{RAE}}
    \subseteq \bigl\{x \mapsto c(x)^\top A\, c(x)
      : A \in \mathrm{Sym}(\Real^N)\bigr\},
\end{equation}
which is the reproducing kernel Hilbert space (RKHS) of $k_{\Real}(x,x') = (c(x)^\top c(x'))^2$.
For probability loading, this reduces to the RKHS of the Hellinger
kernel~$k_H$.
\end{theorem}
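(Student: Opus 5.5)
The first identity is a direct expansion. I write $\ket{\psi_c(x)} = \sum_i c_i(x)\ket{i}$, and since $c_i(x)\in\Real$ the bra is $\bra{\psi_c(x)} = \sum_i c_i(x)\bra{i}$ with no conjugation. Then
\[
f_\theta(x) = \sum_{i,j} c_i(x)\,c_j(x)\,M_{ij}(\theta), \qquad M_{ij}(\theta)=\bra{i}U(\theta)^\dag O\,U(\theta)\ket{j},
\]
which is exactly $c(x)^\top M(\theta)\,c(x)$. Hermiticity of $O$ is preserved under unitary conjugation, so $M(\theta)$ is Hermitian, though in general complex.

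For the inclusion into the class of real symmetric quadratic forms, I split $M = \mathrm{Re}\,M + i\,\mathrm{Im}\,M$. Hermiticity gives $\mathrm{Re}\,M$ symmetric and $\mathrm{Im}\,M$ antisymmetric. For any real vector $c$, the antisymmetric part contributes $c^\top(\mathrm{Im}\,M)c = 0$. Hence $f_\theta(x) = c(x)^\top A\,c(x)$ with $A = \mathrm{Re}\,M(\theta)\in\mathrm{Sym}(\Real^N)$. This step is where realness enters, and it is the heart of the statement: all phase information carried by $U(\theta)$ and $O$ is annihilated on the real subspace. Ranging over $\theta$ gives $\mathcal{F}_{\mathrm{RAE}}\subseteq\{x\mapsto c(x)^\top A c(x)\}$, and since the construction does not depend on $U$ or $O$, the inclusion holds uniformly.

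For the RKHS identification, I use the feature map $\varphi(x) = c(x)\otimes c(x) = c(x)c(x)^\top \in \mathrm{Sym}(\Real^N)$, equipped with the Frobenius inner product. Then $\langle\varphi(x),\varphi(x')\rangle_F = (c(x)^\top c(x'))^2 = k_{\Real}(x,x')$, so $k_{\Real}$ is a positive semidefinite kernel with this explicit finite-dimensional feature space. By the standard characterisation of the RKHS of a feature-map kernel, the RKHS is $\{x\mapsto\langle A,\varphi(x)\rangle_F : A\in\mathrm{Sym}(\Real^N)\}$. Since $\langle A, cc^\top\rangle_F = c^\top A c$, this coincides with the quadratic-form class above.

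The main point requiring care is that the RKHS is the set of such functions, possibly with a non-injective parametrisation $A\mapsto f_A$ when $c(\mathcal{X})$ does not span enough directions. I would handle this by quotienting by the kernel of the map, which is the usual RKHS construction and does not affect set equality. For probability loading I substitute $c_i(x)=\sqrt{p_i(x)}$, which gives $(c(x)^\top c(x'))^2 = (\sum_i\sqrt{p_i(x)p_i(x')})^2 = k_H(p(x),p(x'))$, so the RKHS is that of the Hellinger kernel of Definition~\ref{def:hellinger}.
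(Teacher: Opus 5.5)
Your proposal is correct and follows essentially the same route as the paper. Both arguments expand to $c(x)^\top M(\theta)\,c(x)$, drop the antisymmetric part $\mathrm{Im}\,M(\theta)$ on real vectors (you justify this more explicitly than the paper's ``cancel symmetrically''), and identify the quadratic-form class with the RKHS through the feature map $c(x)\otimes c(x)$, which is your $c(x)c(x)^\top$ under the Frobenius inner product. The paper adds one step you omit: choosing $O = U(\theta)\,A\,U(\theta)^\dagger$ shows that $O \mapsto \mathrm{Re}\,(U(\theta)^\dagger O\,U(\theta))$ is surjective onto $\mathrm{Sym}(\Real^N)$, so the inclusion is attained when $O$ ranges freely, but the statement as phrased with $\subseteq$ does not require this.
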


\begin{proof}
Expanding the quantum expectation value directly in the computation basis:
\begin{equation}
  f_\theta(x)
    = \sum_{i,j} c_i(x)\, M_{ij}(\theta)\, c_j(x)
    = c(x)^\top M(\theta)\, c(x).
\end{equation}
Since the state coefficients are strictly real-valued ($c(x) \in \Real^N$), the 
imaginary components of the Hermitian matrix $M(\theta)$ cancel symmetrically, meaning 
only the real part contributes to the expectation:
\begin{equation}
  f_\theta(x) = c(x)^\top \mathrm{Re}(M(\theta))\, c(x).
\end{equation}

For any $A \in \operatorname{Sym}(\mathbb{R}^N)$, choosing
$O = U(\theta)\,A\,U(\theta)^\dagger \in \operatorname{Herm}(\mathcal{H})$
gives $\operatorname{Re}(M(\theta)) = \operatorname{Re}(A) = A$;
hence as $O$ ranges over $\operatorname{Herm}(\mathcal{H})$, the map
$O \mapsto \operatorname{Re}(U(\theta)^\dagger O\, U(\theta))$ is surjective
onto $\operatorname{Sym}(\mathbb{R}^N)$ for every fixed $\theta$.

The class of functions $x \mapsto c(x)^\top A\, c(x)$ for $A \in \operatorname{Sym}(\mathbb{R}^N)$
is exactly the RKHS of $k_R(x,x') = (c(x)^\top c(x'))^2$.
To see this, note that $k_R(x,x') = \langle c(x) \otimes c(x),\, c(x') \otimes c(x')\rangle$
embeds each point into the symmetric tensor product $\mathbb{R}^N \odot \mathbb{R}^N$;
functions of the form $c(x)^\top A\, c(x)$ are precisely the linear functionals
on this feature space~\cite{scholkopf2002learning}.
\end{proof}

\begin{corollary}[Classical realisability]
\label{cor:classical}
Every function $f \in \mathcal{F}_{\mathrm{RAE}}$ can be computed
classically in $O(N^2)$ time, where $N$ is the dimension of the
amplitude vector (i.e., $N = 2^n$ for $n$ qubits).  
This is
polynomial in the data dimension~$N$ but exponential in the number of
qubits---matching the encoding's own cost.  The hypothesis class of
any real-valued amplitude encoding PQC therefore provides no Type~B
quantum advantage: a classical kernel machine with the same data
access computes the same functions.
\end{corollary}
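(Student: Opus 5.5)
The plan is to read Corollary~\ref{cor:classical} directly off Theorem~\ref{thm:equiv} and add an explicit complexity count. By the theorem, every $f \in \mathcal{F}_{\mathrm{RAE}}$ has the form $f(x) = c(x)^\top A\, c(x)$. The coefficient matrix is $A = \mathrm{Re}\bigl(U(\theta)^\dag O\, U(\theta)\bigr) \in \mathrm{Sym}(\Real^N)$, and it does not depend on the data.

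First I fix the data-access model. The quantum model must be handed (or must prepare) $\ket{\psi_c(x)}$, so I grant the classical machine the same access: the vector $c(x) \in \Real^N$. I also grant it the fixed real symmetric matrix $A$. This is a one-time, data-independent preprocessing cost, analogous to fixing $\theta$ and $O$, and it is not charged per input.

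Second, I count the per-input cost. Forming $A c(x)$ takes $N^2$ multiply-adds. The inner product $c(x)^\top (A c(x))$ takes a further $N$ operations. The total is $O(N^2)$, polynomial in $N$, and with $N = 2^n$ it equals $O(4^n)$. State preparation of an arbitrary $N$-dimensional real vector generically needs $\Omega(N)$ gates and classical access to all $N$ entries, so the classical cost is polynomial in the encoding's own input size. That is the sense of ``matching'' in the statement.

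Third, I give the kernel-machine formulation. By the last step of the proof of Theorem~\ref{thm:equiv}, these functions are exactly linear functionals on the feature map $x \mapsto c(x)\otimes c(x)$. A classical kernel machine with $k_{\Real}(x,x') = (c(x)^\top c(x'))^2$, computable in $O(N)$ per kernel entry, therefore spans the same hypothesis class. Taking $c_i = \sqrt{p_i}$ in this step gives the probability-loading case with $k_H$. Since the quantum and classical models express identical function classes under identical data access, there is no enlargement of the hypothesis class, and hence no Type~B advantage in the sense defined in the introduction.

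The main obstacle is the second point, making the complexity claim honest. One must state clearly that the $O(N^2)$ bound assumes $A$ is available classically. For an arbitrary deep $U(\theta)$, computing $\mathrm{Re}(U^\dag O U)$ may itself be expensive. The Type~B claim is about the expressible function class, not about simulating a given $\theta$. I would therefore phrase the conclusion as follows: every function in the class is realised by some classical $O(N^2)$ evaluator, and training in the classical model can optimise over $A$ directly, which bypasses $U$. This is consistent with Type~A advantages remaining possible, as the footnote notes.
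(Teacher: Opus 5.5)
Your proposal is correct and follows the paper's own route. The paper gives no separate proof, and you do what it implicitly does: read the corollary off Theorem~\ref{thm:equiv}, evaluate $c(x)^\top A\, c(x)$ in $O(N^2)$, and identify these functions with the RKHS of $k_{\Real}$. Your caveat that $A$ is taken as classically given, so the bound is a per-hypothesis evaluator rather than a uniform simulation of $U(\theta)$, is a sharpening the paper leaves unstated; one small correction is that for a fixed observable $O$ you only get the inclusion $\mathcal{F}_{\mathrm{RAE}} \subseteq \{x \mapsto c(x)^\top A\, c(x)\}$, not literally identical classes, but that inclusion is all the no-Type-B conclusion needs.
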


\begin{corollary}[Commutativity of the pulled-back algebra]
\label{cor:commutative}
The pulled-back observable algebra on $\mathcal{X}$---the algebra of
functions $\{f_\theta : \mathcal{X} \to \Real\}$ generated by varying
$\theta$ and $O$---is commutative, since these are ordinary
real-valued functions that multiply pointwise.

The operator algebra on $\Hil$ itself remains the full
$\mathrm{Herm}(\Hil)$; it is only its image on $\mathcal{X}$ via
real-valued encoding that collapses to a purely commutative subalgebra
of real quadratic forms.
\end{corollary}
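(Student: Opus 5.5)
The statement to prove is Corollary~\ref{cor:commutative}. The plan is to get it directly from Theorem~\ref{thm:equiv}, keeping the operator level and the function level apart.

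First, fix the ambient algebra. Take $\mathcal{C}(\mathcal{X},\Real)$, the real-valued functions on $\mathcal{X}$ under pointwise addition and multiplication. It is commutative, since $(fg)(x)=f(x)g(x)=g(x)f(x)$ for every $x$, and associative. By Theorem~\ref{thm:equiv}, each $f_\theta$ equals $x\mapsto c(x)^\top \mathrm{Re}(M(\theta))\,c(x)$. This is a real number for every $x$ because $M(\theta)$ is Hermitian. So every generator $f_\theta$, for any $\theta$ and any $O$, lies in $\mathcal{C}(\mathcal{X},\Real)$.

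Next, the algebra generated by these functions is the smallest subalgebra of $\mathcal{C}(\mathcal{X},\Real)$ containing them: finite sums and pointwise products of the $f_\theta$, together with constants. A subalgebra of a commutative algebra is commutative, so the pulled-back algebra is commutative. For the ``quadratic forms'' description, note that the linear span of the generators is exactly $\{x\mapsto c(x)^\top A\,c(x): A\in\mathrm{Sym}(\Real^N)\}$. This uses the surjectivity of $O\mapsto \mathrm{Re}(U^\dagger O U)$ onto $\mathrm{Sym}(\Real^N)$, established in the proof of Theorem~\ref{thm:equiv}. Pointwise products of such forms give higher-degree symmetric polynomials in $c(x)$, which are still real functions. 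So the generated algebra lies inside the commutative algebra of polynomials in the real coordinates $c_i(x)$.

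For the second paragraph of the corollary, exhibit non-commuting elements of $\mathrm{Herm}(\Hil)$ to show the operator algebra itself is noncommutative. For $n\ge1$, take Pauli $X$ and $Z$ on one qubit. Their images under the quadratic-form map are $x\mapsto 2c_0c_1$ and $x\mapsto c_0^2-c_1^2$ (in a $2$-dimensional slice), which commute as functions. This illustrates that the map $O\mapsto f_O$ turns commutators into nothing: the pointwise product of $f_{O_1}$ and $f_{O_2}$ is symmetric in its two factors even when $[O_1,O_2]\neq 0$.

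The main obstacle is interpretive rather than technical. The map $O\mapsto f_O$ is linear but not multiplicative, since $f_{O_1O_2}\neq f_{O_1}f_{O_2}$ in general, and $O_1O_2$ need not even be Hermitian. So ``image of the operator algebra'' must be read as the algebra generated by the image set, not a homomorphic image. I will state this explicitly, so that the claim becomes: the image is a linear space of real quadratic forms, and the algebra it generates under pointwise operations is commutative.
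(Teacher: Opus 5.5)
Your proposal is correct and follows the paper's own argument. The paper justifies the corollary only by noting that the $f_\theta$ are ordinary real-valued functions that multiply pointwise, and this is exactly your step of embedding the generated algebra into the commutative algebra of real functions on $\mathcal{X}$. Your extra scaffolding is sound and in one respect sharper than the paper. The set $\{x\mapsto c(x)^\top A\,c(x)\}$ is only a linear space, because products of quadratic forms are in general quartic rather than quadratic on the sphere. So the generated commutative algebra lives among higher even-degree polynomials in $c(x)$, and is not literally a ``subalgebra of real quadratic forms.'' You should, however, drop the phrase that $O\mapsto f_O$ ``turns commutators into nothing.'' For complex Hermitian $O_1,O_2$, the function $f_{i[O_1,O_2]}$ need not vanish: for example $i[Y,X]=2Z$ gives $2(c_0^2-c_1^2)$ with $U=I$.
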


\begin{remark}[Why data re-uploading escapes this bound]
\label{rem:reupload}
When data enters multiple layers, the state is
$U_L(\theta)\,\Phi_0(x) \cdots U_1(\theta)\,\Phi_0(x)\ket{0}$,
and $f(x)$ becomes a multilinear form in the amplitudes of degree
$2L$.  For $L > 1$ this strictly exceeds the hypothesis class of
Theorem~\ref{thm:equiv}.
\end{remark}

\begin{remark}[Why complex amplitudes escape this bound]
\label{rem:complex}
When $c_i(x) \in \Complex$, the expectation value depends on the
\emph{full} Hermitian matrix $M(\theta)$, not just its real part.
The imaginary components of $M$ contribute through cross-products of
phases, generating a kernel that is not classically realisable in
general.
\end{remark}

\section{Quantumness Measures for Encodings}
\label{sec:quantumness}


Theorem~\ref{thm:equiv} establishes a strict, binary verdict for 
idealized real-valued amplitude encodings. However, practical QML 
architectures rarely inhabit such clear-cut extremes; instead, they 
employ complex, hybridized feature maps where the boundaries of quantum 
advantage become obscured. To transform our theoretical insights into 
an operational engineering tool, we require quantitative diagnostics 
capable of measuring non-classicality on a continuous scale. 

In this section, we introduce a complementary suite of metrics designed 
to isolate and quantify the exact channels through which quantum 
advantage emerges or collapses. We analyze these channels across three 
distinct physical dimensions: 
\begin{enumerate}
    \item \textbf{Phase Space Distribution ($Q[\Phi]$ and $C[\Phi]$):} To diagnose the presence of data-dependent quantum interference and track the structural presence of complex phases.
    \item \textbf{Information-Theoretic Topology ($I[\Phi]$):} To evaluate the presence of inter-mode quantum correlations and multi-qubit entanglement over the data domain.
    \item \textbf{Information-Geometric Excess ($\Delta g$):} To formally quantify the gap between the quantum Fubini–Study geometry and the classical Fisher–Rao geometry induced by the encoding.
\end{enumerate}
Together, these diagnostics allow us to audit any arbitrary quantum feature map and predict its susceptibility to the Inverse Born Rule Fallacy.

\subsection{Wigner Negativity: $Q[\Phi]$}
\label{sec:wigner}

Quasiprobability representations generalise classical probability
distributions to quantum states.  Unlike true probabilities, they
can take negative values; these negative regions are a signature
of non-classicality with no classical probabilistic analogue.
Our goal is to track when an encoding injects sufficient phase complexity 
to force the state representation into these non-classical regimes.

For continuous-variable systems, the standard quasiprobability
representation is the Wigner function
$W_\rho : \Real^{2n} \to \Real$, defined via the Weyl transform.
States with $W_\rho \geq 0$ everywhere (mixtures of Gaussian states)
admit a classical probabilistic description; negative regions indicate
genuinely quantum features~\cite{Hudson1974}.

For finite-dimensional (qubit) systems, discrete quasiprobability
representations have been developed by Gibbons, Hoffman, and
Wootters~\cite{Gibbons2004} and by Gross~\cite{Gross2006}.  Gross
proved a discrete Hudson theorem: in odd prime dimension, a pure state
has a non-negative discrete Wigner function if and only if it is a
stabiliser state.  For multi-qubit systems (dimension $2^n$, not
prime), generalised frameworks exist \cite{Rundle2017}.  In all cases,
the qualitative distinction persists: states with only real amplitudes
in the computational basis have non-negative quasiprobability
representations, while states with data-dependent complex phases
generically develop negative regions.

\begin{definition}[Wigner negativity of an encoding]
\label{def:Q}
Let $W_{\rho(x)}$ denote the quasiprobability representation of the
encoded state $\rho(x) = \Phi(x)$ (continuous-variable Wigner function
or a discrete analogue as appropriate).  The Wigner negativity of the
encoding~$\Phi$ is
\begin{equation}
  Q[\Phi] = \mathbb{E}_{x \sim P_{\mathcal{X}}}\!\left[
    \int W^-_{\rho(x)}(\alpha)\,d\alpha
  \right],
\end{equation}
where $W^-(\alpha) = \max(0, -W_\rho(\alpha))$ is the negative part,
and the integration (or summation, in the discrete case) runs over the
full phase space.
\end{definition}



\begin{remark}[The necessity of phase complexity {$C[\Phi]$}]
\label{rem:Q0}
It is tempting to associate classicality directly with the vanishing of Wigner negativity ($Q=0$). However, this is geometrically imprecise. By the Hudson--Gross theorem~\cite{Gross2006}, the only pure states with non-negative discrete Wigner functions are stabiliser states. General real-valued superpositions---including probability loading $\ket{\psi} = \sum_i \sqrt{p_i}\ket{i}$---are generically non-stabiliser states and therefore exhibit $Q > 0$ despite lacking complex phases.

We therefore adopt the \emph{phase complexity}
\begin{equation}
  C[\Phi] = \mathbb{E}_{x}\!\left[
    \frac{\|\mathrm{Im}\,\psi(x)\|^2}{\|\psi(x)\|^2}
  \right]
\end{equation}
as the primary non-classicality diagnostic for encodings. Unlike $Q$, which conflates real-valued amplitude distributions with phase-driven interference, $C = 0$ if and only if all amplitudes are real. This measure is convention-free, efficiently computable, and---crucially---vanishes under exactly the same condition that forces the Berry connection to zero (Proposition~\ref{prop:geometric}).
\end{remark}

\subsection{Correlational Topology: Mode Mutual Information $I[\Phi]$}
\label{sec:mutual}

While phase complexity tracks local state properties, quantum advantage 
frequently depends on the non-local structure of the feature map---specifically, 
the generation of multi-qubit entanglement across the circuit register. 
To diagnose whether an encoding generates meaningful internal correlations, 
we measure its information-theoretic mode separation.

\begin{definition}[Mode-wise mutual information of an encoding]
Partition the $n$ qubits into individual modes $1, \ldots, n$ and let
$\rho_j(x) = \mathrm{tr}_{-j}[\rho(x)]$ be the reduced state of
mode~$j$.  The mode mutual information is
\begin{equation}
  I[\Phi] = \mathbb{E}_{x \sim P_{\mathcal{X}}}\!\left[
    S\!\Bigl(\bigotimes_{j=1}^n \rho_j(x)\Bigr)
    - S\bigl(\rho(x)\bigr)
  \right],
\end{equation}
where $S(\sigma) = -\mathrm{tr}(\sigma \log \sigma)$ is the von
Neumann entropy.
\end{definition}

$I[\Phi]$ tracks total inter-mode entanglement in the encoded state,
averaged over the data distribution. It vanishes for pure product states
and scales monotonically toward maximum values for highly entangled states.

For probability loading, the encoded state may possess non-zero $I[\Phi]$ if the loaded classical distribution $p(x)$ contains correlations between its variables, as these classical correlations manifest as mathematical entanglement across the qubit partition. However, because $C = 0$, this entanglement lacks phase coherence. It merely reflects classical statistical structure embedded in a quantum tensor product, rather than genuinely non-local quantum interference.

\subsection{Information-Geometric Excess and the Berry Connection}
\label{sec:geometry}

Ultimately, the physical consequences of phase complexity and correlation 
topology express themselves through the geometry of the state manifold. 
To map this explicitly, we compare the differential geometry of the quantum 
embedding against foundational boundaries in classical information theory.

For a pure-state feature map
$\Phi(x) = \ket{\psi(x)}\!\bra{\psi(x)}$ the pullback of the
Fubini--Study metric to $\mathcal{X}$ is
\begin{equation}
  (\phi^* g_{FS})_{\mu\nu}(x)
    = \mathrm{Re}\bigl[
        \braket{\partial_\mu \psi}{\partial_\nu \psi}
        - \braket{\partial_\mu \psi}{\psi}
          \braket{\psi}{\partial_\nu \psi}
      \bigr],
\end{equation}
where $\partial_\mu = \partial / \partial x^\mu$.  The second term is
the square of the Berry connection
$A_\mu(x) = i\braket{\psi}{\partial_\mu \psi}$.

The classical Fisher--Rao metric on the probability simplex is
$g^{\mathrm{Fisher}}_{\mu\nu}(x) = \sum_i p_i^{-1}\,
\partial_\mu p_i\, \partial_\nu p_i$.

\begin{definition}[Information-geometric excess]
\begin{equation}
  \Delta g_{\mu\nu}(x)
    = (\phi^* g_{FS})_{\mu\nu}(x)
      - \tfrac{1}{4}\, g^{\mathrm{Fisher}}_{\mu\nu}(x).
\end{equation}
\end{definition}

\begin{proposition}[Geometric mechanism of the equivalence]
\label{prop:geometric}
For real-valued amplitude encoding ($c_i(x) \in \Real$,
$\sum_i c_i^2 = 1$), the Berry connection vanishes identically:
\begin{equation}
  A_\mu(x) \equiv i\braket{\psi}{\partial_\mu \psi} = 0
  \qquad \text{for all } x \in \mathcal{X}.
\end{equation}
Consequently, the Fubini--Study pullback reduces to
$(\phi^* g_{FS})_{\mu\nu} =
\braket{\partial_\mu \psi}{\partial_\nu \psi}$, and for probability
loading this equals
$\tfrac{1}{4} g^{\mathrm{Fisher}}_{\mu\nu}$, giving $\Delta g = 0$.

It is the restriction to $\Real$---not the $L_2$ normalisation
requirement per se---that forces the model into the classical regime.
\end{proposition}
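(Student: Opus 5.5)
The argument is a direct computation in the computational basis, using only the reality of the coefficients and the normalisation constraint; we carry it out in three steps.

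\textbf{Step 1 (vanishing Berry connection).} For $\ket{\psi(x)} = \sum_i c_i(x)\ket{i}$ with $c_i(x)\in\Real$ we have
\begin{equation}
  \braket{\psi}{\partial_\mu\psi} = \sum_i c_i\,\partial_\mu c_i = \tfrac12\,\partial_\mu \sum_i c_i^2 = \tfrac12\,\partial_\mu 1 = 0 .
\end{equation}
The first equality uses reality: $\overline{c_i} = c_i$ and $\partial_\mu c_i\in\Real$. The second and third use normalisation. Hence $A_\mu \equiv 0$ on $\mathcal{X}$, assuming $c$ is differentiable wherever the metric is defined.

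Reality alone already makes $\braket{\psi}{\partial_\mu\psi}$ real, so for a general complex state normalisation kills only $\mathrm{Re}\braket{\psi}{\partial_\mu\psi}$. The imaginary part, which is exactly the Berry connection up to sign, vanishes here because of the restriction to $\Real$. This remark supplies the closing sentence of the proposition: the $L_2$ constraint always removes the real part, and the reality restriction removes the rest.

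\textbf{Step 2 (reduction of the Fubini--Study pullback).} Substituting $A_\mu=0$ into the expression for $\phi^*g_{FS}$ removes the projective correction term $\braket{\partial_\mu\psi}{\psi}\braket{\psi}{\partial_\nu\psi}$. What remains is
\begin{equation}
  \mathrm{Re}\braket{\partial_\mu\psi}{\partial_\nu\psi} = \sum_i \partial_\mu c_i\,\partial_\nu c_i ,
\end{equation}
which is already real, so the $\mathrm{Re}$ may be dropped.

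\textbf{Step 3 (probability loading).} Set $c_i=\sqrt{p_i}$. Where $p_i>0$, we have $\partial_\mu c_i = \partial_\mu p_i/(2\sqrt{p_i})$, so
\begin{equation}
  \sum_i \partial_\mu c_i\,\partial_\nu c_i = \tfrac14\sum_i p_i^{-1}\partial_\mu p_i\,\partial_\nu p_i = \tfrac14 g^{\mathrm{Fisher}}_{\mu\nu}.
\end{equation}
This gives $\Delta g_{\mu\nu}=0$.

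The only delicate point is the boundary of the simplex, where some $p_i(x)=0$ and $\sqrt{p_i}$ may fail to be differentiable. There the Fisher term for index $i$ is itself singular or undefined. I would handle this by restricting to the interior ($p_i>0$ for all $i$), where both metrics are smooth, and noting that wherever $\sqrt{p_i}$ is differentiable at a zero (so that $\partial_\mu p_i=0$ there), both the $i$-th Fisher contribution (interpreted as $0$) and $(\partial_\mu c_i)^2$ agree by continuity. Apart from this domain caveat, every step is a one-line identity; no earlier result beyond the definitions of $A_\mu$, $\phi^*g_{FS}$, $g^{\mathrm{Fisher}}$ and $\Delta g$ is needed.
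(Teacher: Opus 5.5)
Your proposal is correct and follows essentially the paper's own argument. Normalisation forces $\mathrm{Re}\langle\psi|\partial_\mu\psi\rangle=0$, reality of the $c_i$ makes $\langle\psi|\partial_\mu\psi\rangle=\sum_i c_i\,\partial_\mu c_i$ real and hence zero (so $A_\mu=0$), and the chain rule for $c_i=\sqrt{p_i}$ then gives $\phi^*g_{FS}=\tfrac{1}{4}g^{\mathrm{Fisher}}$. Your treatment of boundary points where some $p_i(x)=0$ is extra care the paper omits, and it is sound: there, differentiability of $\sqrt{p_i}\ge 0$ forces $\partial_\mu\sqrt{p_i}=0$.
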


\begin{proof}
Differentiating $\braket{\psi}{\psi} = 1$ gives
$\mathrm{Re}\,\braket{\psi}{\partial_\mu \psi} = 0$.  When all
amplitudes are real, $\braket{\psi}{\partial_\mu \psi} =
\sum_i c_i\, \partial_\mu c_i \in \Real$, so
$\braket{\psi}{\partial_\mu \psi} = 0$ and $A_\mu = 0$.

For probability loading, $c_i = \sqrt{p_i}$ gives
$\partial_\mu c_i = \partial_\mu p_i / (2\sqrt{p_i})$, whence
\begin{equation}
  (\phi^* g_{FS})_{\mu\nu}
    = \sum_i (\partial_\mu c_i)(\partial_\nu c_i)
    = \sum_i \frac{\partial_\mu p_i\, \partial_\nu p_i}{4\, p_i}
    = \tfrac{1}{4}\, g^{\mathrm{Fisher}}_{\mu\nu}. 
\end{equation}
\end{proof}

\begin{remark}[Physical intuition of the Berry Connection]
\label{rem:berry_intuition}
The Berry connection admits a concrete geometric interpretation that
clarifies why its vanishing implies classicality.

Consider a family of quantum states $\ket{\psi(x)}$ parameterised by
data~$x$.  As $x$ changes, the state changes.  Part of this change is
visible to measurement: the probabilities
$p_i(x) = |c_i(x)|^2$ shift, and this shift is captured by the
Fisher--Rao metric.  But there can be an additional, invisible
change: the phases of the amplitudes rotate.  The Berry connection
$A_\mu = i\braket{\psi}{\partial_\mu \psi}$ measures exactly this
invisible rotation---the component of $\ket{\partial_\mu \psi}$ that
is parallel to $\ket{\psi}$ and therefore undetectable by any
single measurement of~$|\psi|^2$.

When $A_\mu = 0$, all changes in the state as $x$ varies are
visible to measurement: the probabilities $p_i(x)$ capture
everything, and the Fisher--Rao metric is a complete description of
the data geometry.  This is the classical regime.

When $A_\mu \neq 0$, the state carries information in its
phases that probabilities alone cannot reveal.  If $x$ traverses a
closed loop in data space, the state accumulates a Berry phase
$\gamma = \oint A_\mu\, dx^\mu$ that has no classical analogue---it is
a global, topological property of the encoding that cannot be
decomposed into local probability changes.  Quantum interference can
access this phase; classical statistics cannot.

For real-valued amplitudes, $\braket{\psi}{\partial_\mu \psi} =
\sum_i c_i\,\partial_\mu c_i \in \Real$, and the normalisation
constraint forces this to zero.  There is no room for invisible
rotation because there are no phases to rotate.  This is why the
restriction to $\Real$, not the normalisation per se, is the
mechanism that locks the model into the classical regime.
\end{remark}



\subsection{Encoding Comparison}

Table~\ref{tab:measures} compiles our diagnostic values across the six canonical 
feature maps detailed in Section~\ref{sec:features}, providing an immediate 
structural profile of each strategy.
\begin{table}[t]
\centering
\caption{Quantumness measures for encoding strategies.  $Q$ (Wigner
  negativity) is generically positive for all non-stabiliser states,
  including real-valued encodings, and therefore does not distinguish
  classical from quantum regimes in the sense of
  Theorem~\ref{thm:equiv}.  The operationally relevant diagnostics
  are $C$ (phase complexity) and $\Delta g$ (information-geometric
  excess), which vanish if and only if all amplitudes are real.}
\label{tab:measures}
\begin{tabular}{lccccl}
\hline
\textbf{Encoding}
  & $Q[\Phi]$ & $C[\Phi]$ & $I[\Phi]$ & $\Delta g$
  & \textbf{Hypothesis class} \\
\hline
Prob.\ loading (PL)
  & ${>}\,0$ & 0 & ${\geq}\,0$ & 0
  & Hellinger kernel \\
General RAE
  & Varies & 0 & ${\geq}\,0$ & 0
  & Inner-product kernel \\
Angle encoding (AE)
  & ${>}\,0$ & 0 & 0 & 0
  & Fourier (quadratic) \\
Re-uploading (RU)
  & ${>}\,0$ & 0$^*$ & ${>}\,0$ & 0$^*$
  & Multilinear ($2L$) \\
Sandwich (SW)
  & ${>}\,0$ & ${>}\,0$ & ${>}\,0$ & ${>}\,0$
  & Non-commutative \\
Hamiltonian (HE)
  & ${>}\,0$ & ${>}\,0$ & ${>}\,0$ & ${>}\,0$
  & Non-commutative \\
\hline
\multicolumn{6}{l}{\footnotesize
  $^*$For $R_y$-based injection as in our experiments;
  $C, \Delta g > 0$ if $\Phi_0$ introduces complex phases.}
\end{tabular}
\end{table}

\section{Numerical Experiments}
\label{sec:experiments}

\subsection{Design Rationale}

The experiments are designed to validate three claims, not to
benchmark encoding strategies against one another.

\emph{Claim~1 (Equivalence Theorem).}
Probability loading collapses to chance on classification tasks whose
decision boundaries lie outside the quadratic-form hypothesis class.

\emph{Claim~2 (Diagnostic validity).}
The encoding diagnostics $C$ (phase complexity), $|A_\mu|$ (Berry
connection magnitude), and $I[\Phi]$ (mode mutual information),
computed on the encoding alone, predict whether a given encoding can, in principle, reach beyond the classical regime.

\emph{Claim~3 (Two escape routes).}
Theorem~\ref{thm:equiv} has two independent premises---real-valued
amplitudes and data-independent ansatz.  Breaking either one
suffices to escape the classical bound: complex-phase encodings
(SW, HE) break the first; data re-uploading (RU) breaks the second.

We select four classification tasks with progressively harder decision
boundaries, each chosen because of what the Equivalence Theorem
predicts about probability loading.

\subsection{Task Selection}

\paragraph{Two-moons.}
Two interleaving half-circles in $\Real^2$.  The boundary is roughly
linear---the Hellinger kernel suffices for coarse separation.
\emph{Prediction}: PL performs adequately.

\paragraph{Concentric circles.}
Inner and outer circles in $\Real^2$.  The boundary is radial,
$x_1^2 + x_2^2 = r^2$, which is a quadratic form in $x$---but the
softmax preprocessing in PL distorts the geometry, and the resulting
Hellinger kernel does not preserve radial structure.
\emph{Prediction}: PL fails.

\paragraph{XOR.}
Four Gaussian clusters at $(\pm 1, \pm 1)$ with labels determined by
$\mathrm{sign}(x_1 \cdot x_2)$.  The decision boundary requires a
function that changes sign with the product of features.  Under
probability loading, all amplitudes $\sqrt{p_i} \geq 0$, so the
quadratic form $c(x)^\top A\, c(x)$ cannot encode the required sign
flip through the data.
\emph{Prediction}: PL fails.
\newpage
\paragraph{Parity (3-bit).}
Labels are $y = x_1 \oplus x_2 \oplus x_3$ (thresholded at zero).
This requires a three-body interaction---no pairwise function
suffices.  Even phase-active encodings require sufficient circuit depth
to express this function.
\emph{Prediction}: PL fails; shallow phase-active encodings may also
struggle.

\subsection{Setup}

All experiments use a pure NumPy state-vector simulator.  Each
encoding is followed by a hardware-efficient ansatz (alternating
$R_y$--$R_z$ rotations and CNOT ladders).  We use 2~qubits for
Two-moons, Circles, and XOR; 3~qubits for Parity.
Training uses Adam with finite-difference gradients over 300
iterations, batch size 32, learning rate 0.05.
Accuracy is reported on a held-out test set of 500~samples, averaged
over 5 random seeds.  Table~\ref{tab:config} reports the
configuration for each encoding.

\begin{table}[t]
\centering
\caption{Encoding configurations.  Parameters counted for $n = 3$
  qubits.  RU has the most parameters due to trainable rotations
  between data re-uploads.}
\label{tab:config}
\begin{tabular}{lccc}
\hline
\textbf{Encoding} & \textbf{Enc.\ layers} & \textbf{Ansatz depth}
  & \textbf{Total params} \\
\hline
Probability loading (PL)  & 1 & 3 & 18 \\
Angle encoding (AE)       & 1 & 3 & 18 \\
Data re-uploading (RU)    & 3 & 2 & 30 \\
Sandwich (SW)             & 3 & 2 & 15 \\
Hamiltonian (HE)          & 3 & 2 & 15 \\
\hline
\end{tabular}
\end{table}

\subsection{Encoding Diagnostics}
\label{sec:diagnostics}

Before training, we compute three diagnostics on the encoded states
(no ansatz), averaged over 100 data points drawn uniformly from each
task.

\emph{Phase complexity} $C[\Phi]$: the fraction of the state norm
carried by imaginary amplitudes,
$C = \mathbb{E}_x\bigl[\|\mathrm{Im}\,\psi(x)\|^2 /
\|\psi(x)\|^2\bigr]$.  This is a simple, convention-free proxy for
the presence of complex phases: $C = 0$ if and only if all amplitudes
are real.

\emph{Berry connection magnitude} $|A_\mu|$: the absolute value
$|\!\braket{\psi}{\partial_\mu \psi}\!|$ averaged over feature
dimensions and data points.  By Proposition~\ref{prop:geometric},
$|A_\mu| = 0$ for all real-valued encodings.

\emph{Mode mutual information} $I[\Phi]$: the total von Neumann
entanglement across qubit modes
(Definition in Section~\ref{sec:mutual}).

Table~\ref{tab:diagnostics} reports the results.

\begin{table}[t]
\centering
\caption{Encoding diagnostics, averaged over data points.  PL and AE
  are strictly classical ($C = 0$, $|A| = 0$).  RU uses only
  $R_y$ and CNOT gates, which are real-valued, so $C = 0$ despite
  multi-layer structure.  SW and HE produce genuinely complex states.}
\label{tab:diagnostics}
\begin{tabular}{lcccc}
\hline
\textbf{Encoding}
  & $C[\Phi]$ & $|A_\mu|$ & $I[\Phi]$
  & \textbf{Regime} \\
\hline
PL   & 0.000 & 0.000 & 0.15 & Classical \\
AE   & 0.000 & 0.000 & 0.00 & Classical (product) \\
RU   & 0.000 & 0.000 & 0.72 & Classical (entangled) \\
SW   & 0.26  & 0.92  & 0.69 & Quantum \\
HE   & 0.11  & 0.49  & 0.34 & Quantum \\
\hline
\end{tabular}
\end{table}
Two observations are immediate. First, PL, AE, and RU all satisfy $C = 0$ and $|A_\mu| = 0$: their encoded states have purely real amplitudes. This is because $R_y$ rotations produce real entries and CNOT preserves this property. Second, we observe a residual mode mutual information $I[\Phi] = 0.15$ for PL. This non-zero value arises strictly from the normalization constraints and internal correlations of the classical distribution $p(x)$ across the state vector, embedding classical correlations as qubit entanglement without introducing quantum phase complexity. In contrast, SW and HE produce genuinely complex states ($C > 0$, $|A_\mu| > 0$) accompanied by strong phase-coherent entanglement.

This means that the diagnostics $C$, $|A_\mu|$, and $I$ characterise
the \emph{encoding}, not the full model. A model can succeed despite
$C = 0$ if it escapes the Equivalence Theorem through a different
mechanism---specifically, through multi-layer data re-uploading, which
violates the theorem's data-independent ansatz assumption rather than
its real-amplitude assumption.
\begin{figure}[t]
\centering
\includegraphics[width=\textwidth]{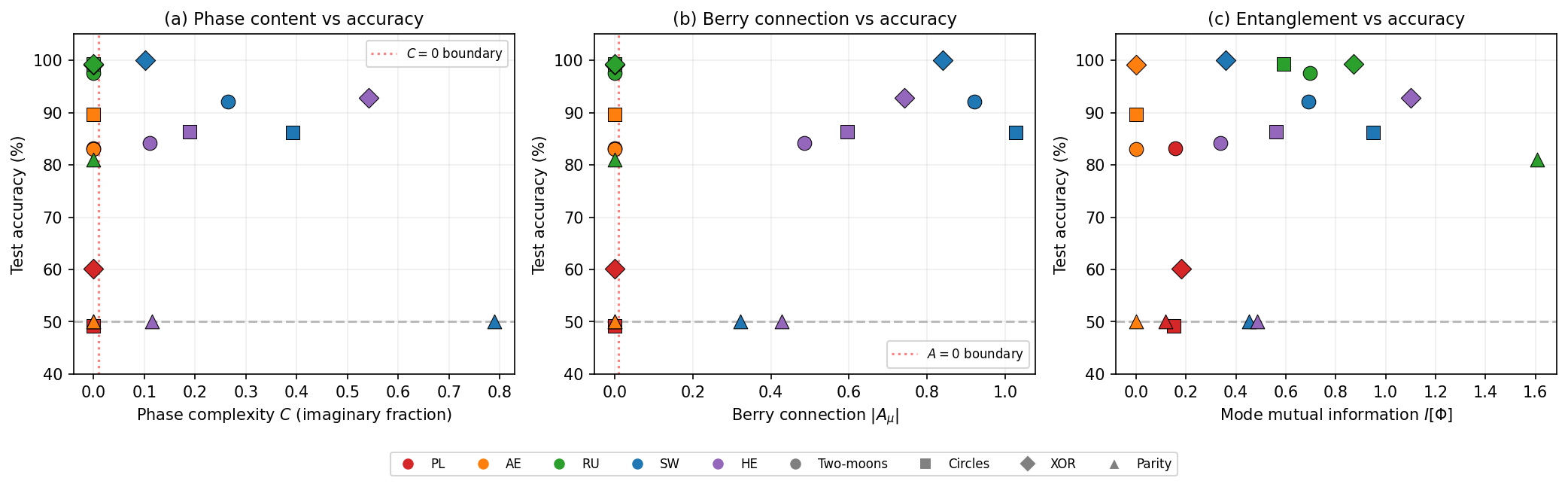}
\caption{Encoding diagnostics vs.\ classification accuracy.  Each
  point is one encoding--task pair (shape = task, colour = encoding).
  Left: phase complexity $C$ separates classical ($C = 0$: PL, AE,
  RU) from quantum ($C > 0$: SW, HE) encodings, but does not predict
  accuracy by itself.
  Centre: Berry connection $|A_\mu|$ shows the same separation.
  Right: mode mutual information $I$ is nonzero for entangling
  encodings (RU, SW, HE), including the classically-encoded RU.}
\label{fig:diagnostics}
\end{figure}

\subsection{Classification Accuracy}

Table~\ref{tab:accuracy} reports test accuracy.

\begin{table}[t]
\centering
\caption{Test accuracy (\%, mean $\pm$ std over 5 seeds).
  PL collapses on all nonlinear tasks.  AE and RU solve Parity
  (${\sim}81\%$) through deep ansatz circuits despite $C = 0$,
  while SW/HE fail despite quantum encoding diagnostics---indicating
  that circuit depth, not encoding quantumness alone, determines
  Parity performance.}
\label{tab:accuracy}
\begin{tabular}{lcccc}
\hline
\textbf{Encoding}
  & \textbf{Two-moons} & \textbf{Circles}
  & \textbf{XOR} & \textbf{Parity} \\
\hline
PL  & $85.0 \pm 1.4$ & $48.1 \pm 6.0$
    & $60.6 \pm 12.3$ & $48.2 \pm 2.5$ \\
AE  & $83.1 \pm 0.5$ & $90.1 \pm 3.1$
    & $99.1 \pm 0.4$ & $81.6 \pm 4.6$ \\
RU  & $97.4 \pm 0.8$ & $99.3 \pm 0.4$
    & $99.0 \pm 0.8$ & $80.8 \pm 4.5$ \\
SW  & $93.4 \pm 5.4$ & $87.6 \pm 2.4$
    & $100.0 \pm 0.0$ & $58.5 \pm 5.8$ \\
HE  & $82.1 \pm 2.5$ & $83.1 \pm 1.9$
    & $100.0 \pm 0.1$ & $51.2 \pm 0.8$ \\
\hline
\end{tabular}
\end{table}

The results confirm Claim~1 and refine Claims~2 and~3.

\emph{Claim~1}: PL collapses to near-chance on Circles
($48.1\%$), XOR ($60.6\%$), and Parity ($48.2\%$), consistent with
Theorem~\ref{thm:equiv}.  On Two-moons ($85.0\%$) it performs
adequately, as predicted for a roughly linearly separable task.

\emph{Claim~2}: The encoding diagnostics correctly separate
classical from quantum encodings: $C = 0$ and $|A_\mu| = 0$ for
PL, AE, and RU; $C > 0$ and $|A_\mu| > 0$ for SW and HE.
However, encoding diagnostics alone do not predict accuracy---the
ansatz depth and parameter count matter independently.

\emph{Claim~3 (Two escape routes, refined).}
The Parity task reveals a striking pattern: AE ($C = 0$, $I = 0$,
18~params, depth~3) achieves $81.6\%$, while SW ($C = 0.79$,
$I = 0.45$, 15~params, depth~2) achieves only $58.5\%$.  The
classically-encoded AE outperforms the quantum-encoded SW, because
AE's deeper ansatz (depth~3 with CNOT entanglement) compensates for
its encoding's lack of quantum content.  
\emph{Claim~3 (The trigonometric baseline and escape routes, refined).}
The experiments reveal a crucial distinction in how encodings interact with the Equivalence Theorem. 
AE ($C = 0$) operates entirely within the bounds of Theorem~\ref{thm:equiv}: its hypothesis class is strictly a classical quadratic form. However, it achieves strong performance on Circles ($90.1\%$) and XOR ($99.1\%$). This is not a violation of the theorem, but a consequence of its basis: AE maps data to trigonometric amplitudes ($\cos(x_j/2), \sin(x_j/2)$). The resulting quadratic form $c(x)^\top A\, c(x)$ generates a finite Fourier series, allowing it to easily resolve periodic and non-linear boundaries that strictly spatial probability loading ($\sqrt{p(x)}$) cannot. 

To genuinely escape the quadratic-form limitation of Theorem~\ref{thm:equiv} and access structurally deeper quantum hypothesis classes, an encoding must take one of two routes:
\begin{itemize}
\item \emph{Route~1: Complex phases.}
  SW and HE break the real-amplitude premise ($C > 0$). On XOR, where the task is two-dimensional and the parameter budget is sufficient, this yields perfect accuracy ($100\%$) with fewer parameters than classical alternatives.
\item \emph{Route~2: Data re-uploading.}
  RU breaks the data-independent ansatz premise. Through multi-layer data injection ($L = 3$), it generates high-degree polynomial features even with purely real gates, achieving robust performance ($97$--$99\%$) across spatial tasks where a single-layer real encoding fails.
\end{itemize}

The practical conclusion is that quantum encoding diagnostics
($C$, $|A_\mu|$, $I$) characterize the \emph{potential} quantum
content of the encoding, but accuracy depends on the interplay
between encoding, ansatz depth, and parameter count.  At equal depth
and parameters, quantum encodings (Route~1) outperform classical
ones—SW achieves $100\%$ on XOR with 15 parameters, while PL
achieves only $60.6\%$ with 18.  But sufficient depth (Route~2) can
compensate for classical encodings on many tasks.

\begin{figure}[t]
\centering
\includegraphics[width=\textwidth]{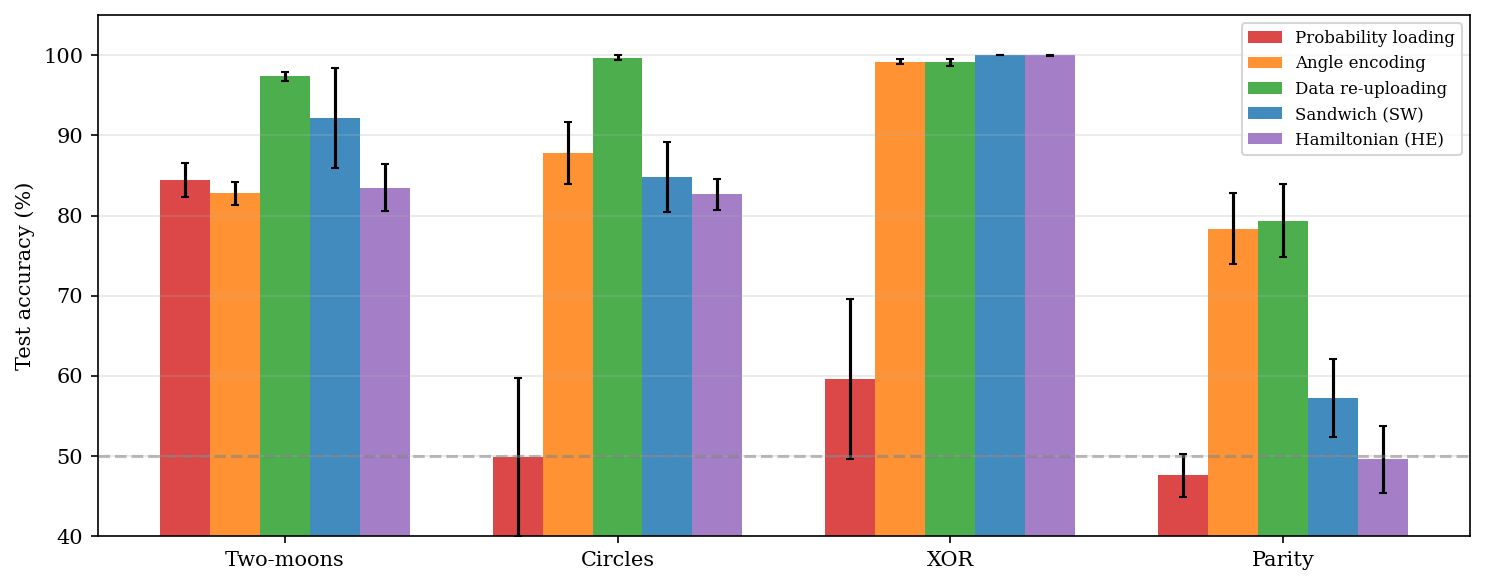}
\caption{Test accuracy across four tasks and five encodings.
  PL (red) collapses to near-chance on Circles, XOR, and Parity.
  On XOR, SW and HE achieve $100\%$ with fewer parameters than PL
  ($60.6\%$), demonstrating quantum encoding advantage at equal
  depth.  On Parity, AE and RU succeed through deep ansatz circuits
  despite $C = 0$.  Dashed line: chance level ($50\%$).}
\label{fig:accuracy}
\end{figure}

\section{Discussion}
\label{sec:discussion}
\paragraph{Encoding diagnostics vs.\ model performance.}
The phase complexity~$C$, Berry connection~$|A_\mu|$, and mutual
information~$I$ are properties of the \emph{encoding} $\Phi$,
computed before any ansatz is applied. The Parity experiment
illustrates that encoding diagnostics alone do not predict
accuracy. AE ($C = 0$) outperforms SW ($C = 0.79$) despite lacking phase complexity, not because it breaks the Equivalence Theorem, but because its trigonometric basis ($\cos, \sin$) naturally expresses finite Fourier series, effectively resolving nonlinear periodic boundaries that spatial probability loading cannot. Therefore, $C > 0$ indicates quantum resources in the encoding; $C = 0$ indicates that the model is bounded to a classical quadratic form, and its success relies purely on the choice of basis (AE) or circuit architecture (re-uploading).

\paragraph{Potential vs.\ realised advantage.}
A subtle but important distinction must be drawn between the
\emph{quantum potential} of an encoding and the \emph{realised
quantum advantage} on a given task.  Phase complexity $C$ and Berry
connection $|A_\mu|$ characterise the potential: sandwich encoding
produces $C > 0$ for every non-trivial input $x \neq 0$, because the
$e^{-i\lambda Z_j Z_k}$ interaction generates complex phases
regardless of the data.  This is a property of the map~$\Phi$, not
of the dataset.

However, whether these quantum resources translate into advantage
depends on two further conditions.  First, the \emph{data must have
structure that the encoding's quantum features can resolve}.
On XOR, the parity of the $ZZ$ eigenvalues mirrors the sign structure
$\mathrm{sign}(x_1 \cdot x_2)$ that defines the task---the encoding
and the problem are geometrically aligned, and SW achieves $100\%$.
On Two-moons, the decision boundary is nearly linear, and the
additional phase structure contributes little beyond what the
Hellinger kernel already captures.
Second, the \emph{circuit must have sufficient depth and parameters}
to channel the encoding's quantum resources toward the task.
On Parity, SW has $C = 0.79$ and $I = 0.45$---the encoding provides
quantum resources---but with only three trainable coupling parameters,
the model cannot express the required three-body interaction.

The Equivalence Theorem provides a one-sided guarantee: $C = 0$ and
$|A_\mu| = 0$ imply that no quantum advantage is possible, regardless
of task or depth.  The converse does not hold: $C > 0$ is necessary
but not sufficient.  Quantum advantage is the product of encoding
potential, data--encoding alignment, and circuit capacity.  The theorem
sets one factor to zero for real-valued amplitude encoding; the
experiments show that the remaining factors matter independently.

\paragraph{The two escape routes are complementary.}
Route~1 (complex phases) and Route~2 (depth/re-uploading) are
independent and can be combined.  The experiments suggest that at
equal parameter budget, Route~1 is more efficient on tasks where the
relevant features involve phase interference (XOR: SW achieves
$100\%$ with 15~parameters vs.\ PL's $60.6\%$ with 18), while
Route~2 is more robust across diverse tasks (RU achieves
$97$--$99\%$ on Two-moons, Circles, and XOR with 30~parameters).
On Parity, neither route alone suffices with the tested
configurations; combining both (e.g., sandwich encoding with
data re-uploading) is a natural direction for future work.

\paragraph{State preparation overhead.}
Our result is independent of state preparation complexity: even with
an efficient preparation oracle, the hypothesis class of real-valued
amplitude encoding remains classically realisable.  The well-known
$O(N)$ circuit depth required for general amplitude
preparation~\cite{SchuldPetruccione2021} is an additional, orthogonal
limitation.

\paragraph{Qubit cost vs.\ expressivity.}
PL encodes $N = 2^n$ values into $n$ qubits---exponential
compression.  Phase-active encodings require $n = d$ qubits.
Whether compression outweighs the hypothesis class restriction
depends on the application: for tasks well-modelled by inner-product
similarity (e.g., document retrieval), PL is appropriate; for tasks
requiring nonlinear feature interactions, phase-active encodings or
multi-layer re-uploading are necessary.

\paragraph{The geometric boundary.}
Proposition~\ref{prop:geometric} locates the boundary between
classical and quantum regimes at the Berry connection: $|A_\mu| = 0$
for real-valued encodings, $|A_\mu| > 0$ for complex.  The
experiments confirm that $|A_\mu|$ is a readily computable diagnostic:
it separates PL/AE/RU ($|A_\mu| = 0$) from SW/HE ($|A_\mu| > 0$)
cleanly, and correlates with the presence of genuinely quantum
interference in the encoded state.

\paragraph{Limitations.}
Our analysis assumes a data-independent ansatz $U(\theta)$.  Adaptive
architectures where $U$ depends on $x$ fall outside our framework.
The diagnostics $C$, $|A_\mu|$, and $I$ require access to the full
state vector, which scales exponentially; for large circuits, sampling
approximations are necessary.
We do not address trainability: phase-active encodings may be subject
to barren plateaus~\cite{McClean2018}, though quantum-aware
optimisers~\cite{Stokes2020} can partially mitigate this.
Our results characterize the hypothesis class at the algorithmic level
and are therefore independent of hardware noise; validation on
noisy intermediate-scale quantum (NISQ) devices is
left for future work.


\section{Conclusion}
\label{sec:conclusion}

We have shown that real-valued amplitude encoding---of which
probability loading is the non-negative sub-case---provides no
Type~B quantum advantage: its hypothesis class is equivalent to a
classical inner-product kernel, regardless of ansatz depth or
entanglement (Theorem~\ref{thm:equiv}).

The geometric mechanism is precise: restricting amplitudes to~$\Real$
forces the Berry connection to vanish, collapsing the Fubini
–Study metric to the classical Fisher–Rao metric ($\Delta g = 0$).  
The critical boundary is therefore not the $L_2$ normalization, but
the restriction to real values.

The Equivalence Theorem bounds encodings to classical quadratic forms in their respective amplitude basis. Angle Encoding operates strictly within this bound, but utilizes a trigonometric basis to efficiently resolve periodic boundaries. To structurally escape the quadratic-form limitation entirely, models must exploit one of two routes. 
\emph{Route~1} (complex-phase encodings such as sandwich or Hamiltonian encoding) breaks the real-amplitude premise: data-dependent phases generate Berry connection $|A_\mu| > 0$, yielding perfect accuracy on XOR with fewer parameters than classical alternatives. \emph{Route~2} (deep ansatz circuits or data re-uploading) breaks the data-independent ansatz premise: multi-layer data injection yields high-degree polynomial features even with purely real gates, achieving strong performance on Parity where shallow encodings fail.
We introduced three encoding diagnostics—phase complexity $C$,
Berry connection magnitude $|A_\mu|$, and mode mutual information
$I[\Phi]$—and showed that they characterize the encoding's quantum
content independently of the ansatz.  Models with $C > 0$ exploit
genuinely quantum resources; models with $C = 0$ must rely on
classical mechanisms (depth, re-uploading) for expressivity.

We conclude that the choice of feature map is the primary design
decision in QML.  Encoding design should be guided by explicit
measures of quantum content rather than by computational convenience,
and the distinction between quantum resources in the encoding and
classical resources in the architecture should be made explicit in
any claim of quantum advantage.

\end{document}